\newcommand{\pass}{\texttt{Pass}}
\newcommand{\fail}{\texttt{Error}} 
\newcommand{\localerror}[1]{\texttt{LocalError(#1)}} 
\newcommand{\intererror}{\texttt{InterError}} 
\newcommand{\centralerror}{\texttt{CentralError}}
\newcommand{\lang}{\mathcal{L}}
\newcommand{\reads}[1]{\xrightarrow{#1}} 
\newcommand{\aut}{\mathcal{A}}
\newcommand{\suff}[1]{\textsf{suf}(#1)}
\newcommand{\area}[1]{\textsf{Area}(#1)}
\newcommand{\central}[1]{\textsf{Central}(#1)}
\newcommand{\decentral}[1]{\textsf{Proj}(#1)}
\newcommand{\ar}[1]{\textsf{area}(#1)} 
\newcommand{\init}{\textsf{init}} 
\newcommand{\interaut}{\textsf{Inter}(\aut, (\mu_i)_i)} 
\newcommand{\inter}{\textsf{Inter}}
\newcommand{\set}[1]{\{ #1 \}}
\newcommand{\proj}[1]{{|_{#1}}} 
\newcommand{\size}[1]{{|#1|}}
\newcommand{\noti}{\text{\=\i}}
\begin{document}

\title{Runtime Verification of Interactions Using Automata}
\author{Chana Weil-Kennedy\inst{1}\orcidID{0000-0002-1351-8824} \and
Darine Rammal\inst{1}\orcidID{0000-0001-6863-0656} \and
Christophe Gaston\inst{1}\orcidID{0000-0001-6865-5108} \and
Arnault Lapitre\inst{1}\orcidID{0000-0002-2185-4051}}
\authorrunning{C. Weil-Kennedy et al.}
\institute{Université Paris-Saclay, CEA, List, Palaiseau, France \\
\email{firstname.lastname@cea.fr}}

\maketitle

\begin{abstract}
Runtime verification consists in observing and collecting the execution traces of a system and checking them against a specification, with the objective of raising an error when a trace does not satisfy the specification.
We consider distributed systems consisting of subsystems which communicate by message-passing. Local execution traces 
consisting of send and receive events 
are collected on each subsystem. We do not assume that the subsystems have a shared global clock, which would allow a reordering of the local traces. Instead, we manipulate multitraces, which are collections of local traces. 
We use interaction models as specifications: they describe communication scenarios between multiple components, and thus  specify a desired global behaviour.
We propose two procedures to decide whether a multitrace satisfies an interaction,
based on automata-theoretic techniques.
The first procedure is straightforward, while the second provides more information on the type of error and integrates the idea of reusability: because many multitraces are compared against one interaction, some preprocessing can be done once at the beginning.
We implement both procedures and compare them.

\keywords{Runtime Verification \and Asynchronous Distributed Systems \and Interactions \and Monitoring}
\end{abstract}

\section{Introduction}

Formal verification concerns itself with checking whether a system conforms to a given formal requirement, or \textit{specification}.
Runtime Verification (RV) in particular consists in observing system executions and raising warnings as soon as possible when these executions do not conform to the specification. 
Contrary to model-checking, another classic formal verification technique, RV does not require a model of the system being checked.  

The specifications define acceptable executions and can take various forms.
The authors of \cite{MaheGG20}  proposed using \emph{interaction models} (or simply interactions)  for use as formal specifications in the context of RV. The notion of interaction they use is similar to message sequence charts (MSCs) \cite{MauwR99} or UML Sequence Diagrams (UML-SD) \cite{uml251}. 
Interactions define valid scenarios of message exchanges between multiple components, and as such they are particularly well-suited for specifying communication protocols. 
The interaction language includes high-level operators such as sequencing, parallel composition, choice, and repetition. 
Interactions are especially valuable due to their graphical nature and ease of understanding, making them accessible even to non-experts in formal methods. 
They use graphical conventions familiar to engineers, like vertical lifelines for components, horizontal arrows between them for messages, time flowing from top to bottom and scheduling operators shown through annotated boxes.

In \cite{MaheGG20} and in subsequent articles, an RV approach is given in which the exchange of messages between components, called here \textit{subsystems},
are collected via an observation harness deployed on a central component.
The resulting sequences of message sends and receives are called \textit{traces}, and an algorithm is given to determine whether an observed trace is accepted by a reference interaction, based on an unfolding of the operational semantics of the interaction.

In the context of distributed systems, the existence of a central observation harness can be too strong an assumption. 
A distributed system consists of a collection of subsystems which may be deployed on different hardware resources. 
There may be no global clock allowing a reordering of the sequences of actions of each subsystem into a single trace. To address this, 
the notion of {\em multitrace} is used: a collection of \textit{local} traces, one per subsystem. 
The work in \cite{MaheBGLG21} extends the approach of \cite{MaheGG20} to multitraces.
A multitrace is accepted by an interaction if there exists an interleaving of the local traces which is accepted by the interaction.

In \cite{Formalise2024}, to improve the performance of the trace acceptance algorithm introduced in \cite{MaheGG20}, the authors explored a different direction: precomputing a non-deterministic finite automaton (NFA) from the reference interaction which accepts the same traces. This is feasible for a sublanguage of the interaction language\footnote{The only difference with the full interaction language is that all actions under a \textit{loop} operator must happen before a new iteration of the loop starts.}.
Thanks to this result, checking whether a trace is accepted by an interaction can be reduced to checking its acceptance by the corresponding NFA. However, this approach was not extended to multitraces.

To improve performance one may also take into account \textit{reusability}. 
Past approaches define a procedure for one multitrace (or trace) analyzed against one interaction,
but usually a very large number of multitraces is analyzed against one interaction.
Ideally we want an approach designed given an interaction, that is then efficient at analyzing a large number of multitraces against this interaction.
Another desired feature is the error \textit{information}: 
if an error occurs at a local subsystem level, 
the error message should reflect this.
This is important knowledge for the administrator of the system to direct correction mechanisms.

\paragraph{Contributions}
In this paper, we 
extend \cite{Formalise2024}'s approach to handle multitraces, using automata-theoretic techniques.
\begin{itemize}
    \item \textit{Centralized Procedure:} \ we introduce a straightforward algorithm which decides whether an interleaving of local traces is accepted by an NFA.
    \item \textit{Semi-centralized Procedure:} \ we also propose a  more refined algorithm that starts by performing checks at the subsystem level with the aim of detecting local errors as early as possible. Some errors cannot be found at the subsystem level, as they result from the interleaving of the local traces. To have a complete procedure, we perform a second step in which information generated locally during the first step is used to determine whether the multitrace is accepted.
\end{itemize}
The semi-centralized procedure integrates the idea of
reusability: we first compute local verifiers from
an interaction, and then reuse these for each multitrace
to be analyzed against the interaction, yielding
local checks which take linear time 
in the size of the local trace.
Our centralized and semi-centralized procedures 
are \textit{sound} and \textit{complete} 
with regards to our problem,
meaning that the procedures answer \fail{} 
if and only if there is no interleaving of the
local traces 
that is accepted
by the interaction.
We implement and compare the two procedures, discussing the merits of both.

\paragraph{Organization}
\cref{sec:context} describes our setting and gives the definitions.
\cref{sec:centralized} describes the centralized procedure, and \cref{sec:decentralized} describes the semi-centralized procedure. 
\cref{sec:implementation}
 discusses the implementations
 of the procedures and evaluates them.
\cref{sec:related} discusses related work, 
and 
concludes this paper.

\section{Setting and Definitions}
\label{sec:context}

\subsection{Distributed Systems and Observability}

We consider a system comprised of subsystems
distributed over $n\ge 2$ locations.
Each subsystem sends and receives messages, 
and this describes a set of actions of the form  
$s!m$ (subsystem $s$ sends message $m$)
and $s?m$  (subsystem $s$ receives message $m$).
Several subsystems may be at a same location $i$,
but each subsystem is at exactly one location.
We write $\Sigma_i$ the set of actions for location $i$.
Notice that the $\Sigma_i$ are disjoint for $i\in\set{1,\ldots,n}$.
We write $\Sigma$ the union of the $\Sigma_i$ 
for all locations in the system.
We assume that there are a finite number of subsystems
and a finite number of messages\footnote{These can be seen as message types, without regard for possible data being sent.}.

While the system is running, 
observation harnesses placed at each location
collect execution traces, called \emph{local traces},
which are sequences of actions of $\Sigma_i$.
We assume that there is a \emph{local clock} at each location,
and thus that local traces are correctly ordered
(this can be ensured by time-stamping the actions with the local clock).
We make the assumption that local traces
are correctly collected, that is there is \emph{no loss} or faulty observation.
Finally, we make the assumption that there are message markers,
for example session IDs,
which allow the local observation harnesses to organize  what they collect into \emph{finite} local traces, one per message group.
We manipulate \emph{multitraces} made up of $n$ finite local traces,
one per location, all corresponding to the the same group. 
This last assumption corresponds to common practice in 
distributed communication protocols: a session is a temporary, interactive exchange of information between participants communicating across a network. Protocols that use sessions mark all messages of participants in a same session with the session ID
(see e.g. standard \cite{session-protocol}).

\subsection{Interactions} 

\begin{figure}[t]
    \centering
    \includegraphics[scale=0.3]{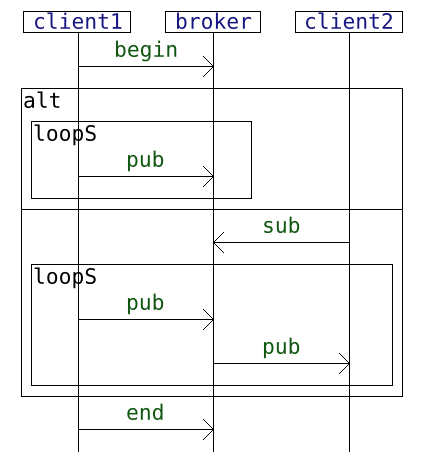} 
    \caption{Graphical representation of an interaction for an MQTT-style protocol.} 
    \label{fig:interaction-example}
\end{figure}

Figure \ref{fig:interaction-example} 
depicts an interaction model with three subsystems $client_1,client_2$ and $broker$, that we suppose in three different locations.
Each subsystem's behaviour is represented by a vertical line called a \emph{lifeline}, 
and messages are horizontal arrows.
The interaction models a simple MQTT-like protocol \cite{mqttv3}:
$client_1$ starts a topic
hosted on the broker's server by sending \textsf{begin};
$client_1$ then publishes regularly to the topic.
If $client_2$ subscribes to the topic,
$broker$ will send it the publications it receives.
The topic ends when $client_1$ sends \textsf{end}
to the broker.

Formally,
let $S$ a finite set of subsystems, 
let $M$ a finite set of messages.
Let $\Sigma(S,M)$ be the finite set of \emph{actions}
$s!m$ and $s?m$ for all $s\in S, m\in M$.

\begin{definition}
Given $S$ and $M$ two finite sets,
an \emph{interaction}
is constructed by the following grammar:
$$I := a \ | \ loop_S(I) \ | \ strict(I_1,I_2) \ | \ alt(I_1,I_2) \ | \ cr_s(I_1,I_2)$$
where $a\in \Sigma(S,M)$ and $s\subseteq S$.
\end{definition}

Intuitively, $strict$ is the strict sequencing 
operator: $strict(i,i')$ enforces that 
actions of $i'$ happen after $i$\footnote{In the graphical representation, a horizontal arrow labeled $m$ from $l_1$ to $l_2$ represents $strict(l_1!m,l_2?m)$.}.
Operator $loop_S(i)$ denotes an arbitrary number 
of repetitions (including zero) of 
$i$ that must finish entirely before
a new repetition starts.
Operator $alt$ denotes exclusive choice
between two (sub)interactions.
Operator $cr_s$, for $s$ a subset of the subsystems,
is called the ``concurrent region'' operator:
$cr_s(i,i')$ indicates that actions of
$i$ and $i'$ on $s$'s lifeline can be interleaved.
Notice that 
there may be a send without a corresponding receive and vice-versa
(like message $m_4$ in \cref{fig:interaction} below).

The semantics of an interaction $I$
is a set of traces $\sigma(I)$;
we say that a trace of $\sigma(I)$ 
is \emph{accepted} by $I$.
The definition of $\sigma(I)$ was originally
given in a structural operational style
using rules that define an expression relation
$\reads{}$ such that 
if $t \in \sigma(i')$ and $i\reads{a}i'$
then $a.t \in \sigma(i)$.
The rules can be found in Section 3.2 
of \cite{Formalise2024}.
However, we will work with a different formalization of the semantics, given as a non-deterministic finite automaton (NFA).
We first define NFAs, then give the link between interactions and NFAs.

\subsection{Words and Automata}

An \emph{alphabet} $\Sigma$ is a finite set of symbols, 
called letters.
A finite \emph{word} is a finite sequence of letters.
The length $|w|$ of a word $w$ is its number of letters.
The set of words of length $n\geq 0$ is $\Sigma^n$;
 $\Sigma^*$ denotes the Kleene closure of $\Sigma$, 
 i.e. the union of all finite words over $\Sigma$.
We denote $\varepsilon$ the empty word of length $0$.

Word $w''\in\Sigma^*$ is a \emph{suffix}
of word $w\in\Sigma^*$ if there exists 
$w'\in\Sigma^*$ such that $w=w'w''$.
The set of 
suffixes of $w$ are denoted $\suff{w}$.
The \emph{projection} of a word $w\in\Sigma^*$
onto $\Gamma \subseteq \Sigma$ 
is obtained by removing all
letters not in $\Gamma$ from $w$, and 
it is denoted $w_\proj{\Gamma}$.
Given a set of words $W\subseteq\Sigma^*$,
$W_\proj{\Gamma}$ is the set of 
$w_\proj{\Gamma}$ for $w\in W$. 
When $\Gamma$ is a location alphabet $\Sigma_i$,
we write $W_\proj{i}$ for $W_\proj{\Sigma_i}$.

\begin{definition}
\label{def:automaton}
A \emph{non-deterministic finite automaton} (NFA)
is $\aut = (Q, \Sigma, \delta, Q_0, \allowbreak F)$, 
where $Q$ is a finite set of states,
$\Sigma$ is a finite alphabet,
$\delta \subseteq Q\times\Sigma\times Q$ 
is a set of transitions denoted
$q_1 \reads{a} q_2$,  
$Q_0\subseteq Q$ are the initial states
and $F\subseteq Q$ are the final states.
We write $\delta(q,a)$ the set of states reachable 
from state $q$ by reading letter $a$.
A \emph{deterministic finite automaton} (DFA) 
is an NFA in which $\delta(q,a)$ is either empty or a singleton
for every $q\in Q, a\in \Sigma$.
\end{definition}

Given an NFA $\aut$,
we let $Q^\aut$ denote the states of $\aut$,
$\delta^\aut$ its transitions,
$Q_0^\aut$ its initial states and
$F^\aut$ its final states.
We extend notation $\delta(q,a)$: 
given $w\in \Sigma^*$, $\delta(q,w)$ is 
the set of states reachable from state $q$ 
by reading $w$.
Given $S\subseteq Q$, $\delta(S,a)$ is 
the union of the $\delta(q,a)$ such that $q\in S$.

A \emph{path} in an NFA $\aut$ is a sequence of 
transitions of $\aut$ which is either empty
or of the form $q_1 \reads{a_1} q_2, q_2 \reads{a_2} q_3, \ldots, q_n \reads{a_n} q_{n+1}$.
We say this path \emph{reads} $\varepsilon$ 
or $a_1 \ldots a_n$ respectively.
A path is \emph{accepting} if it starts 
in an initial state and ends in a final state.
A word $w\in\Sigma^*$ is \emph{accepted} 
by an NFA if there exists 
an accepting path that reads $w$.
The \emph{language} recognized by an NFA,
denoted $\lang(\aut)$, 
is the set of accepted words (possibly empty).
A set of words $\lang \subseteq \Sigma^*$
is called a \emph{regular language} if it is
recognized by an NFA.

\begin{definition}
A sub-automaton of an NFA $\aut$
is an NFA defined by a subset 
of the transitions of $\aut$.
Formally, we call NFA 
$\mathcal{B} = (Q, \Sigma^\aut, \delta, Q_0, F)$ 
a \emph{sub-automaton} of $\aut$ if 
$\delta \subseteq \delta^\aut$,  
$Q$ is the states of $Q^\aut$ appearing in $\delta$,
i.e. $Q= \set{q|(q,a,q')\in \delta^\aut} \cup \set{q'|(q,a,q')\in \delta^\aut}$,
$Q_0=Q_0^\aut \cap Q$
and $F =F^\aut \cap Q$.
\end{definition}

Given two regular languages $\lang_1,\lang_2$,
the \emph{shuffle} of $\lang_1,\lang_2$,
denoted $\lang_1 \shuffle \lang_2$,
contains the words $w$ such that $w$ is
an interleaving of a word $w_1\in\lang_1$ and 
a word $w_2\in\lang_2$.
Formally, given $\lang_1$ over alphabet $\Sigma_1$
and $\lang_2$ over $\Sigma_2$, 
$\lang_1 \shuffle \lang_2$ is the language
$\set{u_1 v_1 u_2 v_2 \ldots u_k v_k \ | \ \forall i\in\set{1,\ldots,k}, u_i \in \Sigma_1^*, v_i \in \Sigma_2^*, w_1=u_1\ldots u_k\in\lang_1, w_2=v_1\ldots v_k\in\lang_2}.$
The shuffle of two regular languages is regular
\cite{GinsburgS65}.
We can generalize the shuffle operator to 
an arbitrary number of languages using
$\lang_1 \shuffle \ldots \shuffle \lang_k \shuffle \lang_{k+1}
= (\lang_1 \shuffle \ldots \shuffle \lang_k) \shuffle \lang_{k+1}$ for $k\ge 2$.

\subsubsection{Link with interactions}
Paper \cite{Formalise2024} shows 
how to synthesize an NFA $\aut$ 
from an interaction $I$
such that the language of the NFA is 
exactly the traces accepted by the interaction.

\begin{theorem}[Theorem 4.8 of \cite{Formalise2024}]
Given an interaction $I$, one can synthesize an NFA $\aut$
such that $\lang(\aut) = \sigma(I)$.
\end{theorem}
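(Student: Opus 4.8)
\medskip
\noindent\emph{Proof plan.}
I would argue by structural induction on $I$, showing at each step that $\sigma$ of a compound interaction is obtained from $\sigma$ of its immediate subterms by a regularity-preserving language operation, and exhibiting an automaton that realises it. The operational rules of \cite{Formalise2024} yield the identities one needs, each provable by an easy induction on the length of the derivations $\reads{}$: $\sigma(alt(I_1,I_2)) = \sigma(I_1)\cup\sigma(I_2)$, $\sigma(strict(I_1,I_2)) = \sigma(I_1)\cdot\sigma(I_2)$, $\sigma(loop_S(I)) = \sigma(I)^{*}$, and a fourth identity for $cr_s$ discussed below. The base case $I = a$ is immediate: the two-state NFA $(\{q_0,q_f\},\Sigma,\{q_0\reads{a}q_f\},\{q_0\},\{q_f\})$ recognises $\{a\} = \sigma(a)$.

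For the three ``rational'' operators the inductive step is routine: given NFAs $\aut_1,\aut_2$ with $\lang(\aut_1)=\sigma(I_1)$ and $\lang(\aut_2)=\sigma(I_2)$, apply the textbook constructions for union ($alt$), concatenation ($strict$) and Kleene star ($loop_S$), which by the identities above recognise $\sigma$ of the compound interaction. These constructions are cleanest with $\varepsilon$-transitions, so a final $\varepsilon$-elimination step puts the automaton into the form of \cref{def:automaton}, which has none. The point worth underlining is why $\sigma(loop_S(I)) = \sigma(I)^{*}$ rather than something larger: it is exactly the loop restriction of this sublanguage (each iteration of the body terminates before the next begins) that makes a trace of $loop_S(I)$ a plain \emph{concatenation} $t_1\cdots t_k$ of traces $t_j \in \sigma(I)$; in the full interaction language successive iterations could weakly interleave and the language would not be the Kleene star of $\sigma(I)$.

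The main obstacle is the concurrent-region operator $cr_s$, and within it the delicate part is extracting the correct language identity from the operational rules, whose side conditions involve the termination predicate $\evadesLfsBase$; once the identity is pinned down the automaton is easy. Unfolding those rules, one obtains: $t \in \sigma(cr_s(I_1,I_2))$ iff $t$ is an interleaving of some $w_1 \in \sigma(I_1)$ and some $w_2 \in \sigma(I_2)$ in which, along every lifeline \emph{not} belonging to $s$, all actions originating from $w_1$ precede all actions originating from $w_2$, while along the lifelines of $s$ the two words may be shuffled freely. This interpolates between weak sequencing (for $s = \emptyset$) and the full shuffle $\sigma(I_1) \shuffle \sigma(I_2)$ (for $s$ the whole set of subsystems). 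To realise it, take the product of $\aut_1$ and $\aut_2$ with a bit of bookkeeping: a state carries, for each lifeline $L \notin s$, a Boolean recording whether $\aut_2$ has already fired a transition on an action of $L$; while that Boolean is false either automaton may move on $L$-actions, firing such a transition in $\aut_2$ flips it to true, and thereafter $\aut_1$ is barred from $L$-actions. Initial states are $(p,q,\mathbf{0})$ with $p \in Q_0^{\aut_1}$, $q \in Q_0^{\aut_2}$; final states are $(p,q,b)$ with $p \in F^{\aut_1}$, $q \in F^{\aut_2}$. The bookkeeping ranges over a finite set, so this is an NFA, and a routine check identifies its language with the set above.

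A tidier alternative to the whole induction — and the one closest in spirit to the unfolding procedure of \cite{Formalise2024} — is the ``derivative automaton'': take its states to be the interaction terms reachable from $I$ under $\reads{}$, quotiented by the obvious structural congruence (e.g.\ identifying $strict(\mathbf{1},J)$ with $J$, where $\mathbf{1}$ is the terminated interaction), with transitions given verbatim by the operational rules, $I$ the sole initial state, and the terminating terms the final states; then $\lang(\aut) = \sigma(I)$ holds essentially by definition of $\sigma$. The only nontrivial point, and again it hinges on the loop restriction, is that this state set is finite: the set of derivatives of $I$ is finite by structural induction, the crucial case being that the derivatives of $loop_S(I)$ are just $loop_S(I)$ together with the terms $strict(J, loop_S(I))$ for $J$ a derivative of the body; dropping the restriction produces arbitrarily deep nestings $strict(J_1, strict(J_2, \dots))$ and the construction breaks.
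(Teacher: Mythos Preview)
The present paper does not prove this theorem at all: it is quoted as Theorem~4.8 of \cite{Formalise2024} and immediately used, with no argument supplied here. There is therefore nothing in this paper to compare your proposal against; what you have written is already strictly more than the paper offers.

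On the substance: both of your routes are standard and essentially correct. The second one---the derivative automaton whose states are the $\reads{}$-reachable terms, with the terminating terms as final states---is almost certainly the construction actually carried out in \cite{Formalise2024}, given the surrounding text's description of that work as ``precomputing'' an NFA via the operational semantics; your finiteness argument correctly isolates $loop_S$ as the place where the sublanguage restriction is used. Two small cautions. First, the grammar as stated in this paper has no terminated constant, so your congruence ``$strict(\mathbf{1},J)\equiv J$'' presupposes a syntactic enrichment that \cite{Formalise2024} presumably makes but this paper does not; it would be safer to phrase the quotient in terms of the termination predicate~$\evadesLfsBase$ mentioned here. Second, your language identity for $cr_s$ (weak sequencing on lifelines outside $s$, free shuffle on $s$) is a reasonable reading, but the one-line gloss in this paper is too informal to confirm it; you are relying on the operational rules of \cite{Formalise2024}, which is fine, but you should say so explicitly rather than present the identity as self-evident.
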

The resulting automaton has exactly one initial state.
In the following, we work with interactions
given as NFAs.

\begin{figure}
	\centering
	\begin{subfigure}[t]{.45\textwidth}
	\centering
    \includegraphics[scale=0.3]{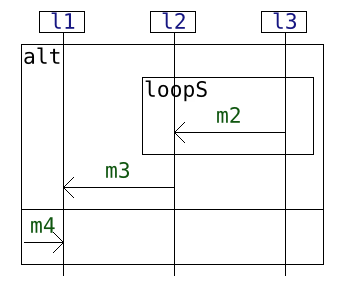} 
    \caption{An interaction.} 
    \label{fig:interaction}
    \end{subfigure}
~
    \begin{subfigure}[t]{.45\textwidth}
	\centering
    \begin{tikzpicture}[node distance=2cm, auto, initial text={}, scale=0.8, transform shape]
		\node[state, initial] (q0) {0};
		\node[state, below left of=q0] (q1) {1};
		\node[state, below=2 of q0] (q2) {2};
		\node[state, accepting, right of=q2] (q3) {3};
		\node[state, left of=q2] (q4) {4};

		\path[->]
		(q0) edge[] node[above left] {$l_3!m_2$} (q1)
		(q0) edge[] node {$l_2!m_3$} (q2)
		(q0) edge[] node {$l_1?m_4$} (q3)
		(q2) edge[] node[below] {$l_1?m_3$} (q3)
		(q4) edge[bend left=20] node[left] {$l_3!m_2$} (q1)
		(q1) edge[bend left=20] node[right] {$l_2?m_2$} (q4)
		(q4) edge[] node[below] {$l_2!m_3$} (q2)
		;
\end{tikzpicture}
    \caption{The corresponding NFA}
    
    \label{fig:nfa-actions}
	\end{subfigure}
	\caption{A graphical representation of an interaction, and its corresponding NFA.}
\end{figure}
\vspace{-5mm}

\subsection{Verification problem}

The problem we consider is, 
given a multitrace and an interaction,
whether there exists an interleaving of 
the local traces of the multitrace 
that is accepted by the interaction.
This is the same problem as in 
previous work \cite{MaheBGLG21}
(where it is formulated as the equivalent: whether there exists a trace accepted by the interaction that projects onto the multitrace).

\begin{example}
\label{ex:interleaving}
Consider the interaction of \cref{fig:interaction-example},
and the multitrace $(\mu_1,\mu_2,\mu_3)$
with $\mu_1 = clt_1!begin \ clt_1!pub \ clt_1!end$,
$\mu_2 = brok?begin \ brok?pub \ brok?end$
and $\mu_3=\varepsilon$ (the empty trace).
An interleaving that is accepted by the interaction
is $clt_1!begin \ clt_1!pub \ brok?begin \ brok?pub \ clt_1!end \ brok?end$ (we are in the second branch of the $alt$,
and the $loop_S$ is taken once).
Notice that $brok$ receives $begin$ after $clt_1$ has 
already sent both $begin$ and $pub$. 
Swapping $clt_1!pub$ and $brok?begin$ would also yield 
an accepted interleaving.
A non-accepted interleaving example is when $brok$ receives $begin$ before $clt_1$ sends it.
\end{example}

We say that a multitrace $(\mu_i)_{i=1}^n$ 
is \emph{locally correct}
with regards to an interaction $I$
if for each local trace there exists a
 trace accepted by $I$ that projects onto it,
i.e. for all $i$, there exists $w \in \sigma(i)$
such that $w_{|i} = \mu_i$.
A locally correct multitrace 
is not always accepted by $I$.

\begin{example}
Consider the multitrace $(\mu_1,\mu_2,\mu'_3)$
where $\mu'_3=clt_2!sub$ and $\mu_1,\mu_2$ 
are as in \cref{ex:interleaving}.
It is locally correct,
but no accepted interleaving exists
because $\mu_3$ corresponds to the first branch of the $alt$
and $\mu_2$ to the second.
Notice that $\mu_1$ does not allow subsystem $clt_1$
to know which branch it is in.
\end{example}

\paragraph{Verification problem in an automata-theoretic light} For a multitrace $(\mu_i)_{i=1}^n$,
the set of interleavings of the $\mu_i$
is the shuffle language 
$\mu_1 \shuffle \mu_2 \shuffle \ldots \shuffle \mu_n $.
Given an interaction as an automaton $\aut$,
we can thus express our verification problem 
as checking 
$\mu_1 \shuffle \mu_2 \shuffle \ldots \shuffle \mu_n \cap \lang(\aut) = \emptyset.$
If there is no element in the intersection
of the shuffle product with the language of $\aut$,
then no interleaving of the local traces 
is accepted by the interaction.

We informally state what we consider large and small values in our setting. 
We consider that there is a \textit{large}
number of multitraces that is checked against 
one same interaction.
We also consider the length of the multitraces
to be \textit{large}.
The number of locations $n$ and the size of 
the interaction NFA $\aut$ are \textit{small}
compared to the two preceding values.

\section{A Centralized Procedure}
\label{sec:centralized}

We have determined in \cref{sec:context} that,
given our interaction as an automaton $\aut$ and
given a multitrace $(\mu_i)_{i=1}^n$,
our verification problem reduces to checking
$\shuffle_{i=1}^n \ \mu_i \cap \lang(\aut) = \emptyset.$
To this end, we construct an automaton 
recognizing language
$\shuffle_{i=1}^n \ \mu_i \cap \lang(\aut)$, 
stopping if we mark a state as final.

\begin{definition}
Let $\aut= (Q^\aut, \Sigma, \delta^\aut, \set{q_0^\aut}, F^\aut)$ an NFA, 
let $\mu_i\in\Sigma_i^*$ for all $i\in[1,n]$.
Then $\central{\aut, (\mu_i)_i} = (Q,\Sigma,\delta,\set{q_0},F)$ 
is defined as
\begin{itemize}
    \item $Q = \suff{\mu_1} \times \ldots \times \suff{\mu_n} \times Q^\aut$
    \item $\delta$ is such that 
    $(w_1,\ldots,w_n,q) \reads{a} (w_1',\ldots,w_n',q')$
    if and only if $q' \in \delta^\aut(q,a)$ and
    there exists $i$ such that $w_i=a w_i'$ and 
    for all $j\neq i$, $w_j=w_j'$
    \item $q_0 = (\mu_1,\ldots,\mu_n,q_0^\aut)$
and $F = \set{(\varepsilon,\ldots,\varepsilon,q) \ | \ q \in F^\aut}$
\end{itemize}
\end{definition}

The states of $\central{\aut, (\mu_i)_i}$
are tuples made up of a suffix of each $\mu_i$ 
and one state of $\aut$.
An $a$-transition between states corresponds to
taking an $a$-transition in $\aut$ and 
reading an $a$ on one of the suffixes.
The automaton starts in the state in which
all words $\mu_i$ are complete, and 
progressively reads an interleaving of the words $\mu_i$
while advancing in $\aut$.

\begin{theorem}
\label{thm:central}
Let $\aut = (Q^\aut, \Sigma, \delta^\aut, \set{q_0^\aut}, F^\aut)$ an NFA, 
let $\mu_i$ a word over $\Sigma_i$ for all $i\in[1,n]$.
Then $\central{\aut, (\mu_i)_i}$ recognizes
the language $\shuffle_{i=1}^n \ \mu_i \cap \lang(\aut)$.
\end{theorem}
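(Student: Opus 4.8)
The plan is to characterize the accepting paths of $\central{\aut, (\mu_i)_i}$ and put them in exact correspondence with pairs consisting of an accepting path of $\aut$ reading a word $w$ together with a witness that $w$ lies in the shuffle $\shuffle_{i=1}^n \mu_i$. Concretely, I would fix a word $w = a_1 \cdots a_k$ and prove that $w \in \lang(\central{\aut, (\mu_i)_i})$ if and only if $w \in \lang(\aut)$ and $w \in \shuffle_{i=1}^n \mu_i$, arguing the two directions separately; this is exactly the claimed language equality.

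For the forward direction, take an accepting path $(\mu_1, \ldots, \mu_n, q_0^\aut) = s_0 \reads{a_1} s_1 \cdots \reads{a_k} s_k$ with $s_k = (\varepsilon, \ldots, \varepsilon, q)$ and $q \in F^\aut$, and write $s_t = (w_1^t, \ldots, w_n^t, q^t)$. Reading off the last component of each transition immediately gives, by the definition of $\delta$, a path $q_0^\aut \reads{a_1} \cdots \reads{a_k} q^k$ in $\aut$ ending in $F^\aut$, so $w \in \lang(\aut)$. The definition of $\delta$ also attaches to each step $t$ an index $i_t$ such that the $i_t$-th component loses its leading letter, which must be $a_t$, while every other component is unchanged. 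Collecting for each $i$ the increasing list of steps $t$ with $i_t = i$ shows that the subsequence of $w$ at those positions spells exactly $\mu_i$ — because the $i$-th component descends from $\mu_i$ to $\varepsilon$ one letter at a time — and that these position sets partition $\{1, \ldots, k\}$, which is precisely a witness that $w \in \shuffle_{i=1}^n \mu_i$ in its standard iterated-shuffle partition form.

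For the backward direction, suppose $w \in \lang(\aut)$ via an accepting path $q_0^\aut \reads{a_1} \cdots \reads{a_k} q^k$ with $q^k \in F^\aut$, and $w \in \shuffle_{i=1}^n \mu_i$, witnessed by a partition of $\{1, \ldots, k\}$ into sets $P_1, \ldots, P_n$ whose induced subsequences of $w$ are $\mu_1, \ldots, \mu_n$. I would rebuild a path in the product by setting $w_i^t$ to be $\mu_i$ with its first $|P_i \cap \{1, \ldots, t\}|$ letters removed — automatically an element of $\suff{\mu_i}$, so $s_t = (w_1^t, \ldots, w_n^t, q^t)$ is a state of the product — and checking that each $s_{t-1} \reads{a_t} s_t$ is legal: the $\aut$-component is correct by hypothesis; the unique $i$ with $t \in P_i$ has its counter incremented, so component $i$ drops exactly one letter, and that letter is $a_t$ since position $t$ is the $|P_i \cap \{1,\ldots,t\}|$-th element of $P_i$ and the $P_i$-subsequence of $w$ is $\mu_i$; and for $j \neq i$ the counter is unchanged, so $w_j^{t-1} = w_j^t$. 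Since $s_0 = q_0$ and $s_k = (\varepsilon, \ldots, \varepsilon, q^k)$ with $q^k \in F^\aut$, the word $w$ is accepted.

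I expect the only delicate point to be the bookkeeping, in both directions, that converts the per-step choice of an index into a genuine partition of the positions of $w$ and hence into a shuffle decomposition; once that correspondence is set up, verifying the transition conditions is mechanical. Note that disjointness of the $\Sigma_i$ is not needed for the argument, although in the forward direction it does make the index $i_t$ uniquely determined by the letter $a_t$ alone.
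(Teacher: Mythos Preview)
Your proof is correct. It differs in style from the paper's own argument: the paper proceeds modularly, recalling the standard ``naive'' automaton recognizing a shuffle $\lang_1 \shuffle \cdots \shuffle \lang_n$ (states are tuples of suffixes, see \cite{EremondiIM18}), specializing it to the singleton languages $\{\mu_i\}$, and then observing that $\central{\aut,(\mu_i)_i}$ is precisely the product of this shuffle automaton with $\aut$, which recognizes the intersection of the two languages. You instead give a direct, self-contained two-inclusion argument by explicitly matching accepting paths of the product with pairs (accepting $\aut$-path, shuffle witness). Your route is more elementary and avoids external citations, at the cost of some bookkeeping; the paper's route is shorter but relies on the reader knowing the shuffle-automaton and product-automaton constructions. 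Both arrive at the same place, and your observation that disjointness of the $\Sigma_i$ is not required is a nice bonus that the paper's proof does not make explicit.
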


\begin{proof}
The shuffle of $n\ge 2$ languages $\lang_1, \ldots, \lang_n$
is recognized by the automaton 
$(Q,\Sigma,\delta,Q_0,F)$ 
where $Q\subseteq (\Sigma^*)^n$ is the states $(w_1,\ldots,w_n)$
for $w_i$ a suffix of a word in $\lang_i$ for all $i$,
$Q_0$ is the states $(w_1,\ldots,w_n)$
for $w_i$ a word in $\lang_i$ for all $i$, and 
$F$ is the state $(\varepsilon,\ldots,\varepsilon)$
(called the \emph{naive automaton} in \cite{EremondiIM18}).
We adapt this automaton to the case where
the $\lang_i$ are the singletons $\set{\mu_i}$,
and intersect it with the NFA for $\aut$
to obtain $\central{\aut, (\mu_i)_i}$.
Notice that in the singleton language shuffle
there is a unique initial state 
$(\mu_1,\ldots,\mu_n)$. 
We take $\aut$ with a unique initial state
(since it comes from an interaction),
thus ensuring a single initial state in $\central{\aut, (\mu_i)_i}$.
\end{proof}

\subsection{Procedure}
\label{sec:proc-central}

To check whether there exists 
an interleaving of the $(\mu_i)_{i=1}^n$
accepted by $\aut$,
we build $\central{\aut, (\mu_i)_i}$
starting in the initial state and
building states reachable in one transition
from states already built.
If we build a final state, then we stop
and return \pass.
Otherwise, we complete the construction and return \fail.
This is illustrated in Algorithm \ref{alg:central}.

\vspace{-5mm}
\begin{algorithm}[hbt]
\caption{Centralized}
\label{alg:central}
\Input{NFA $\aut = (Q^\aut, \Sigma, \delta^\aut, \set{q_0^\aut}, F^\aut)$, multitrace $(\mu_1,\ldots,\mu_n)$}
\Output{\pass{} or \fail{}}
$q_0 \gets (\mu_1,\ldots,\mu_n,q_0^\aut)$\;
$F \gets \set{(\varepsilon,\ldots,\varepsilon,q) \ | \ q \in F^\aut}$\;
$W \gets \set{q_0}$\;
\While{$W \neq \emptyset$}{
  \textbf{pick} $(w_1,\ldots,w_n,q)\in W$\;
    \For{$(w_1',\ldots,w_n',q')$ \textbf{such that} $q' \in \delta^\aut(q,a)$ \textbf{and} $\exists i$, $w_i=a w_i'$ \textbf{and} $\forall j\neq i$, $w_j=w_j'$}{
    \uIf{$(w_1',\ldots,w_n',q') \in F$}{
        \KwRet{} \pass{}
        }
  \textbf{else add} $(w_1',\ldots,w_n',q')$ \textbf{to} $W$\; 
    }
}
\KwRet{} \fail{}
\end{algorithm}
\vspace{-5mm}

\paragraph{Soundness and Completeness}
Our procedure answers \pass{} if and only if
there exists a state reachable from $q_0$ 
in $\central{\aut, (\mu_i)_i}$ which is final
i.e. if and only if there is an accepting word.
By \cref{thm:central}, this ensures that 
the procedure is sound and complete: 
it answers \pass{} if and only if
there exists an interleaving of the $\mu_i$
that is recognized by $\aut$.

\paragraph{Complexity}
Our procedure is deterministic and 
has quadratic complexity.
Given $\aut$
and words $(\mu_i)_{i=1}^n$ over alphabet $\Sigma$, 
$\central{\aut,(\mu_i)_i}$ has  
at most $\prod_i (|\mu_i|+1) \times |Q^\aut|$ states
(each $\mu_i$ has $(|\mu_i|+1)$ suffixes).
Thus there are at most 
$|\Sigma| (\prod_i (|\mu_i|+1) \times |Q^\aut|)^2$
transitions.
The procedure builds this automaton,
checking for emptiness at the same time.

\paragraph{Evaluation}
In case of an error (a multitrace of 
which no interleaving is accepted by $\aut$),
the procedure must build the full automaton
$\allowbreak \central{\aut,(\mu_i)_i}$.
Additionally, recall that in our setting 
we compare a large number of multitraces
against (the NFA of) one interaction.
This procedure starts "from scratch" for
every multitrace, it is not reusable.
Finally, it does not give information
on whether the error is local or not.
We propose an approach that addresses 
these issues in the next section.

\section{A Semi-Centralized Procedure}
\label{sec:decentralized}
For this section, 
we fix an NFA
$\aut$ representing an interaction.
The alphabet of $\aut$ is 
$\Sigma = \Sigma_1 \uplus \ldots \uplus \Sigma_n$,
where $\Sigma_i$ is the alphabet of location $i$.

Our semi-centralized procedure is in two phases.
The first phase is local: given a local trace $\mu_i$,
local verifier at location $i$ reads $\mu_i$
on a pre-computed automaton $\textsf{Proj}(i)$
which recognizes the words accepted by $\aut$ projected onto $\Sigma_i$.
If the trace is not accepted, 
the local verifier sends \localerror{i} 
to the central verifier and the procedure stops.
The transitions of $\textsf{Proj}(i)$ are labeled
so as to produce an output automaton $\area{\mu_i}$
upon reading a word $\mu_i$.
This $\area{\mu_i}$ 
accepts words of $\aut$ projecting onto $\mu_i$.
The second phase (which happens if no \localerror{i}  was sent)
consists in centralizing the $\area{\mu_i}$ for each $i$
and intersecting them to produce automaton $\textsf{Inter}$.
$\textsf{Inter}$ is a subautomaton of $\aut$ 
that contains all the paths corresponding to the $\mu_i$.
The last step is to apply the centralized procedure
to $\textsf{Inter}$.

The local verifiers are set-up by constructing
the labeled NFAs $\textsf{Proj}(i)$,
independent of any trace.
This is explained in \cref{sec:local-verifier}, and
then \cref{sec:local-check} and \cref{sec:central-check}
give the definitions and theorems for the two steps of the procedure.
Finally, \cref{sec:decentral-correct} recapitulates the procedure, shows it is sound and complete,
and discusses its complexity.

\subsection{Local Verifiers}
\label{sec:local-verifier}

We construct a deterministic finite automaton 
which recognizes the language $\lang(\aut)_{|_i}$ 
based on a classic algorithm for conversion from 
$\varepsilon$-NFA to DFA.
Additionally, we label the transitions of this
DFA such that each transition 
is labeled with the \textit{area} of $\aut$ 
that it corresponds to.

Let $\Sigma_\noti$ be the letters of $\Sigma$ 
that are not in $\Sigma_i$, i.e. 
$\Sigma_\noti=\Sigma \setminus \Sigma_i$ .
Given an NFA $\aut = (Q, \Sigma, \delta, q_0, F)$,
we define $\delta_\noti(q)$
the set of states that can be reached from $q \in Q$
by reading a (possibly empty) sequence of transitions over letters of $\Sigma_\noti$, i.e.
$\delta_\noti(q) = \set{q' \ | \ q \reads{w} q' \text{ and } w\in\Sigma_\noti^*}.$
For $S$ a set of states,
$\delta_\noti(S)$ denotes the union of
the $\delta_\noti(q)$ such that $q\in S$.

An $\varepsilon$-NFA is an NFA 
whose alphabet contains the empty word $\varepsilon$,
meaning it has transitions labeled $\varepsilon$.
We mimic the classic algorithm for converting
an $\varepsilon$-NFA to a DFA 
(see e.g. \cite[Section 2.5.5]{HopcroftMotwaniUllman})
except that instead of taking an $\varepsilon$-NFA
as input, we take our NFA $\aut$
and treat transitions labeled by letters of 
$\Sigma_\noti$ like $\varepsilon$-transitions.
The output DFA $\decentral{\aut,\Sigma_i}$ 
is defined below.
Intuitively, taking an $a$-transition in $\decentral{\aut,\Sigma_i}$
corresponds to taking a sequence of transitions
$\reads{a} \reads{w}$ with $w\in\Sigma_\noti^*$
in $\aut$.

\begin{definition}
\label{def:decentral-projection}
Let $\aut = (Q , \Sigma, \delta , \set{q_0} , F )$ 
an NFA over $\Sigma=\Sigma_i\uplus\Sigma_\noti$.
Then $\allowbreak \decentral{\aut,\Sigma_i} = (Q^P,\Sigma_i,\delta^P,\set{q_0^P},F^P)$ 
is the DFA defined as
\begin{itemize}
    \item $Q^P = 2^Q$, i.e. the subsets of $Q$
    \item $\delta^P$ is such that 
    $S \reads{a} S'$ if and only if 
    $S' = \delta_\noti (\delta (S,a))$, 
    for $a \in \Sigma_i$
    \item $q_0^P = \delta_\noti (q_0 )$, i.e. the subset of states reachable from $q_0$ in $\aut$ by reading a sequence of letters in $\Sigma_\noti$
    \item $F^P = \set{S \ | \ F  \cap S \neq \emptyset}$, i.e. the subsets of $Q$ containing a final state of $\aut$.
\end{itemize}
\end{definition}

\begin{figure}[t]
	\centering
	\begin{subfigure}[t]{.45\textwidth}
	\centering
    \begin{tikzpicture}[node distance=2cm, auto, initial text={}, scale=0.7, transform shape]
		\node[state, initial] (q0) {0};
		\node[state, below left of=q0] (q1) {1};
		\node[state, below=2 of q0] (q2) {2};
		\node[state, accepting, right of=q2] (q3) {3};
		\node[state, left of=q2] (q4) {4};

		\path[->]
		(q0) edge[] node {$b$} (q1)
		(q0) edge[] node {$c$} (q2)
		(q0) edge[] node {$a$} (q3)
		(q2) edge[] node {$e$} (q3)
		(q4) edge[bend left=20] node {\(b\)} (q1)
		(q1) edge[bend left=20] node {\(d\)} (q4)
		(q4) edge[] node {$c$} (q2)
		;
\end{tikzpicture}
    \caption{NFA $\aut$} 
    \label{fig:nfa}
    \end{subfigure}
~
    \begin{subfigure}[t]{.45\textwidth}
	\centering
    \begin{tikzpicture}[node distance=2cm, auto, initial text={}, scale=0.7, transform shape]
		\node[state, initial, accepting] (q0) {0,1,3};
		\node[state, accepting, below left of=q0] (q1) {2,3};
		\node[state, below right of=q0] (q2) {1,4};

		\path[->]
		(q0) edge[] node {$c$} (q1)
		(q0) edge[] node {$d$} (q2)
		(q2) edge[] node {$c$} (q1)
		(q2) edge[loop right] node {$d$} (q2)
		;
\end{tikzpicture}
    \caption{The DFA $\decentral{\aut,\set{c,d}}$}
    \label{fig:proj-nfa}
	\end{subfigure}
	\caption{Illustration of \cref{def:decentral-projection}.}
	
	\label{fig:nfa-and-proj}
\end{figure}

\begin{example}
\label{ex:proj}
Figure \ref{fig:nfa}
represents an NFA $\aut$, which is the NFA of 
Figure \ref{fig:nfa-actions} with letters for readability.
The alphabets for subsystems $l_1,l_2,l_3$
are respectively $\Sigma_1=\set{a,e},\Sigma_2=\set{c,d}$ and $\Sigma_3=\set{b}$.
Figure \ref{fig:proj-nfa} shows DFA $\decentral{\aut,\Sigma_2}$.
If a multitrace is not locally correct on location 2,
$\decentral{\aut,\Sigma_2}$ will not accept it, e.g.
$\mu=(e,cd,bbbb)$.
\end{example} 

\begin{theorem}
\label{thm:decentral}
Let $\aut$ an NFA over alphabet 
$\Sigma = \Sigma_i \uplus \Sigma_\noti$.
Then $\decentral{\aut,\Sigma_i}$ recognizes
 $\lang(\aut)$ projected onto $\Sigma_i$, i.e.
$\lang(\decentral{\aut,\Sigma_i})=\lang(\aut)_\proj{i}.$
\end{theorem}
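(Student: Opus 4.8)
The plan is to prove the two language inclusions separately, using the fact that $\decentral{\aut,\Sigma_i}$ is essentially the classical subset construction applied to $\aut$ with transitions over $\Sigma_\noti$ treated as $\varepsilon$-transitions. The natural bridge is a claim relating the run of a word $w\in\Sigma_i^*$ in $\decentral{\aut,\Sigma_i}$ to runs in $\aut$: precisely, for $w = a_1\cdots a_k$ with each $a_j\in\Sigma_i$, if $\delta^P(q_0^P, w) = S$, then $S = \set{q \mid q_0 \reads{v} q \text{ in } \aut \text{ for some } v\in\Sigma^* \text{ with } v_{|i} = w}$. This is the key lemma, and I would prove it by induction on $k$. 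The base case $k=0$ is exactly the definition $q_0^P = \delta_\noti(q_0)$, which captures all states reachable by $\Sigma_\noti$-words (the projections of which onto $\Sigma_i$ are $\varepsilon$). The inductive step uses the transition rule $S \reads{a_{k+1}} \delta_\noti(\delta(S, a_{k+1}))$: reading one more letter $a_{k+1}\in\Sigma_i$ corresponds to taking an $a_{k+1}$-transition in $\aut$ followed by an arbitrary $\Sigma_\noti$-block, which exactly matches appending $a_{k+1}$ (and arbitrary $\Sigma_\noti$-letters) to $v$ while keeping $v_{|i} = a_1\cdots a_{k+1}$.

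Granting the key lemma, both inclusions follow quickly. For $\lang(\decentral{\aut,\Sigma_i}) \subseteq \lang(\aut)_{|i}$: if $w$ is accepted by $\decentral{\aut,\Sigma_i}$, then $\delta^P(q_0^P, w) = S$ with $S \in F^P$, i.e. $S$ contains some final state $q_f \in F$; by the lemma there is $v\in\Sigma^*$ with $v_{|i} = w$ and $q_0 \reads{v} q_f$, so $v\in\lang(\aut)$ and $w = v_{|i} \in \lang(\aut)_{|i}$. For the reverse inclusion: if $w\in\lang(\aut)_{|i}$, pick $v\in\lang(\aut)$ with $v_{|i} = w$; then $v$ has an accepting run $q_0 \reads{v} q_f$ with $q_f\in F$, so by the lemma $q_f \in \delta^P(q_0^P, w)$, hence $\delta^P(q_0^P, w)$ meets $F$ and is in $F^P$, so $w$ is accepted. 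One should also note that $\decentral{\aut,\Sigma_i}$ is genuinely a (complete, total) DFA on $\Sigma_i$ so $\delta^P(q_0^P, w)$ is always a single well-defined state, which is immediate from $Q^P = 2^Q$ and the functional definition of $\delta^P$.

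The main obstacle is getting the key lemma's statement and its inductive step exactly right, in particular being careful about where the $\Sigma_\noti$-closure $\delta_\noti$ is applied. The subtlety is that $\decentral{\aut,\Sigma_i}$ applies $\delta_\noti$ \emph{after} each $\Sigma_i$-letter (and once at the start) but never \emph{before} a $\Sigma_i$-letter; this is the standard asymmetry of the $\varepsilon$-NFA-to-DFA construction and it is sound because $\delta_\noti$ is idempotent in the relevant sense ($\delta_\noti(\delta_\noti(S)) = \delta_\noti(S)$), so the trailing closure from step $j$ serves as the leading closure for step $j+1$. I would state this idempotence (or just the reachability characterization $\delta_\noti(S) = \set{q' \mid \exists q\in S,\ q\reads{w}q',\ w\in\Sigma_\noti^*}$ and its transitivity) explicitly as a small observation before the induction, since it is exactly what makes the inductive step go through without an off-by-one in the closure. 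Everything else is routine bookkeeping about decomposing a word $v\in\Sigma^*$ into its maximal $\Sigma_\noti$-blocks separated by its $\Sigma_i$-letters.
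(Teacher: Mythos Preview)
Your proposal is correct and shares the paper's approach: both identify $\decentral{\aut,\Sigma_i}$ as the classical $\varepsilon$-NFA-to-DFA subset construction applied after replacing $\Sigma_\noti$-transitions by $\varepsilon$-transitions. The only difference is granularity---the paper invokes the textbook language-equivalence result as a black box, whereas you unpack it via the reachability invariant $\delta^P(q_0^P,w)=\{q\mid q_0\reads{v}q,\ v_\proj{i}=w\}$ and a clean induction on $|w|$.
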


\begin{proof}
We replace the transitions of $\aut$
over letters of $\Sigma_\noti$ by 
$\varepsilon$-transitions, and then
follow the classic conversion from 
$\varepsilon$-NFA to DFA
to obtain $\decentral{\aut,\Sigma_i}$.
This conversion guarantees that the DFA
recognizes the same language as the $\varepsilon$-NFA,
which is easily seen to accept exactly 
the words of $\lang(\aut)_\proj{i}$.
\end{proof}

Given an NFA
$\aut = (Q, \Sigma, \delta, q_0, F)$,
let $\Delta_\noti(q)=\delta \cap \set{q'\reads{b}q'' \ | \ q',q'' \in \delta_\noti(q), b\in \Sigma_\noti}$ 
be the set of transitions of $\aut$
corresponding to $\delta_\noti(q)$, and let
$\Delta(q,a)=\set{q\reads{a}q' \ | \ q' \in \delta(q,a)}$
be the set of transitions of $\aut$
corresponding to $\delta(q,a)$.
We extend these definitions naturally to subsets $S$ of $Q$, and use them to define our labeling function.

Each $a$-transition $t$ of $\decentral{\aut,\Sigma_i}$ 
is labeled by the sub-automaton
of $\aut$ containing the corresponding $a$-transition 
as well as the $\Sigma_\noti$-transitions that can be
taken after it.
We define our labeling function.

\begin{definition}
\label{def:ar}
Let $\aut = (Q,\Sigma,\delta,q_0,F)$ 
an NFA over $\Sigma = \Sigma_i\uplus\Sigma_\noti$, 
and let $t = S \reads{a} S'$ a transition of $\decentral{\aut,\Sigma_i}$.
Then $\ar{t}$ is the sub-automaton of $\aut$
defined by the set of transitions
$\Delta(S,a) \cup \Delta_\noti(\delta(S,a)).$
\end{definition}

\begin{example}
\cref{fig:areas} shows the $\ar{t}$
for transitions $t$ of $\decentral{\aut,\Sigma_2}$
(\cref{ex:proj}).
From left to right:  
for $t=\set{0,1,3}\reads{c}\set{2,3}$,
for $t=\set{0,1,3}\reads{d}\set{4,1}$,
and the third NFA is for both
$t=\set{4,1}\reads{c}\set{2,3}$
and for $t=\set{4,1}\reads{c}\set{4,1}$.
\end{example}

\begin{figure}[t]
	\centering
	\begin{subfigure}[t]{.55\textwidth}
	\centering
    \begin{tikzpicture}[node distance=2cm, auto, initial text={}, scale=0.7, transform shape]
		\node[state, initial] (q0) {0};
		\node[state, below=1 of q0] (q2) {2};
		\node[state, accepting, right of=q2] (q3) {3};
		
		\node[state, right=3 of q0] (q1) {1};
		\node[state, below of=q1] (q4) {4};
		
		\node[state, right=2 of q1] (q44) {4};
		\node[state, below=0.5 of q44] (q22) {2};
		\node[state, below=0.5 of q22] (q33) {3};

		\path[->]
		(q0) edge[] node {$c$} (q2)
		(q2) edge[] node {$e$} (q3)
		(q4) edge[bend left=20] node {\(b\)} (q1)
		(q1) edge[bend left=20] node {\(d\)} (q4)
		(q44) edge[] node {$c$} (q22)
		(q22) edge[] node {$e$} (q33)
		;
\end{tikzpicture}
    \caption{The $\ar{t}$ for $t$ in $\decentral{\aut,\Sigma_2}$}
    \label{fig:areas}
    \end{subfigure}
~
    \begin{subfigure}[t]{.35\textwidth}
	\centering
    \begin{tikzpicture}[node distance=2cm, auto, initial text={}, scale=0.7, transform shape]
		\node[state, initial] (q0) {0};
		\node[state, below left of=q0] (q1) {1};
		\node[state, accepting, below right of=q0] (q2) {3};

		\path[->]
		(q0) edge[] node {$b$} (q1)
		(q0) edge[] node {$a$} (q2)
		;
\end{tikzpicture}
	\caption{NFA $\init$ for $\Sigma_2$}
	\label{fig:init}
	\end{subfigure}
	\caption{Illustration of \cref{def:ar} and \cref{def:init}.}
	
\end{figure}

We need to define one final object 
to complete our local verification set-up.
Intuitively, since the labels of the transitions
of $\decentral{\aut,\Sigma_i}$ encode an area of $\aut$
corresponding to a sequence of the form 
$\reads{a} \reads{w}$ for $w\in\Sigma_\noti^*$,
we need to cover the case where there exists an 
initial sequence $\reads{w}$ for $w\in\Sigma_\noti^*$
from $q_0 $.
One can think of this as a special label on
the initial state $q_0^P$ of $\decentral{\aut,\Sigma_i}$.

\begin{definition}
\label{def:init}
Let $\aut = (Q,\Sigma,\delta,\set{q_0},F)$ 
an NFA over $\Sigma = \Sigma_i\uplus\Sigma_\noti$.
Then $\init$ is the sub-automaton of $\aut$
defined by the set of transitions
$\Delta_\noti(q_0 ).$
\end{definition}

\begin{example}
\cref{fig:init} shows $\init$ computed for $\aut$ of \cref{ex:proj} and $\Sigma_2=\set{c,d}$.
\end{example}

\subsection{First step: Local Check}
\label{sec:local-check}

In this section, we show how $\decentral{\aut,\Sigma_i}$, 
labeling function $\textsf{area}$ and $\init$
are used to perform a local check of a local trace.
Given a local trace $\mu_i \in \Sigma_i^*$,
we define the output $\area{\mu_i}$ obtained by 
reading $\mu_i$ on $\decentral{\aut,\Sigma_i}$.

\begin{definition}
\label{def:area}
Let $\aut$
an NFA over $\Sigma = \Sigma_i\uplus\Sigma_\noti$, 
and let 
$\decentral{\aut,\Sigma_i} = (Q^P,\Sigma_i,\allowbreak \delta^P,\set{q_0^P},F^P)$.
Let $\mu_i\in \Sigma_i^*$ a word, 
and let $t_1, \ldots, t_k$ with $t_j\in\delta^P$ 
the unique path reading $\mu_i$ 
from $q_0^P$ in $\decentral{\aut,\Sigma_i}$.
Then $\area{\mu_i}$ is the NFA defined as the union 
$\init \cup \bigcup_{j=1}^k \ar{t_j}.$
That is, $\area{\mu_i}$ is the NFA over $\Sigma$
of state set $Q^\init\cup\cup_j Q^{\ar{t_j}}$,
transitions $\delta^\init\cup\cup_j \delta^{\ar{t_j}}$,
initial state $q_0^\aut$ and final states
$F^\init\cup\cup_j F^{\ar{t_j}}$.
\end{definition}

\begin{figure}[t]
	\centering
	\begin{subfigure}[t]{.45\textwidth}
	\centering
    \begin{tikzpicture}[node distance=2cm, auto, initial text={}, scale=0.7, transform shape]
		\node[state, initial] (q0) {0};
		\node[state, below left of=q0] (q1) {1};
		\node[state, below=2 of q0] (q2) {2};
		\node[state, accepting, right of=q2] (q3) {3};

		\path[->]
		(q0) edge[] node {$b$} (q1)
		(q0) edge[] node {$c$} (q2)
		(q0) edge[] node {$a$} (q3)
		;
\end{tikzpicture}
    \caption{The NFA $\area{c}$}
    \label{fig:nfa-area-1}
    \end{subfigure}
~
    \begin{subfigure}[t]{.45\textwidth}
	\centering
    \begin{tikzpicture}[node distance=2cm, auto, initial text={}, scale=0.7, transform shape]
		\node[state, initial] (q0) {0};
		\node[state, below left of=q0] (q1) {1};
		\node[state, below=2 of q0] (q2) {2};
		\node[state, accepting, right of=q2] (q3) {3};
		\node[state, left of=q2] (q4) {4};

		\path[->]
		(q0) edge[] node {$b$} (q1)
		(q0) edge[] node {$a$} (q3)
		(q2) edge[] node {$e$} (q3)
		(q4) edge[bend left=20] node {\(b\)} (q1)
		(q1) edge[bend left=20] node {\(d\)} (q4)
		(q4) edge[] node {$c$} (q2)
		;
\end{tikzpicture}
    \caption{The NFA $\area{dddc}$}
    \label{fig:nfa-area-2}
	\end{subfigure}
	\caption{Illustration of \cref{def:area}.}

	\label{fig:nfa-area}
\end{figure}

\begin{example}
\cref{fig:nfa-area} shows $\area{c}$ and $\area{dddc}$ 
for $\aut$ of \cref{ex:proj} and $\Sigma_2=\set{c,d}$.
Intuitively, $\area{\mu}$ contains the area of $\aut$
which was traversed 
when reading an accepting word 
that projects onto $\mu$.
If the first letter of $\mu$ is a $c$,
any word of $\lang(\aut)$ that projects onto $\mu$
must take transition $0 \reads{c} 2$.
On the contrary, if the first letter is a $d$, 
transition $0 \reads{c} 2$ cannot be taken.
\end{example}

The initial state of $\area{\mu_i}$ 
is $q_0$ of $\aut$
because $\init$ contains $q_0$,
and $\init$ and the $\ar{t}$ are sub-automata of $\aut$.
The constructed $\area{\mu_i}$,
which is also a sub-automaton of $\aut$,
accepts the words of $\lang(\aut)$ 
that are equal to $\mu_i$
when projected onto letters of $\Sigma_i$.

\begin{theorem}
\label{thm:area}
Let $\aut$ an NFA over alphabet 
$\Sigma = \Sigma_i \uplus \Sigma_\noti$,
and $\mu_i$ a word of $\Sigma_i^*$.
Then 
$\set{w \in \lang(\aut) \ | \ w_\proj{i} = \mu_i} \subseteq \lang(\area{\mu_i}).$
\end{theorem}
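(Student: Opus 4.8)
The plan is to show that \emph{any} accepting run of $\aut$ on a word $w$ with $w_\proj{i}=\mu_i$ is itself an accepting run of $\area{\mu_i}$; since $\area{\mu_i}$ is a sub-automaton of $\aut$, this is the natural thing to attempt. First I would recall the path in $\decentral{\aut,\Sigma_i}$ that defines $\area{\mu_i}$: writing $\mu_i=a_1\cdots a_k$, let $t_1,\dots,t_k$ be the unique path reading $\mu_i$ from $q_0^P=\delta_\noti(q_0)$ (it exists because $S\reads{a}\delta_\noti(\delta(S,a))$ is defined for every $S,a$, so $\decentral{\aut,\Sigma_i}$ is complete), say $t_j=S_{j-1}\reads{a_j}S_j$ with $S_j=\delta_\noti(\delta(S_{j-1},a_j))$. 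By \cref{def:area,def:ar,def:init}, $\area{\mu_i}$ is then the sub-automaton of $\aut$ whose transition set is $\Delta_\noti(q_0)\cup\bigcup_{j=1}^{k}\bigl(\Delta(S_{j-1},a_j)\cup\Delta_\noti(\delta(S_{j-1},a_j))\bigr)$ and whose initial state is $q_0$.

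Next I would take an accepting run of $\aut$ on $w$ and cut it along the occurrences of $a_1,\dots,a_k$. Since $w_\proj{i}=a_1\cdots a_k$ and each $a_j\in\Sigma_i$, we can write $w=w_0\,a_1\,w_1\cdots a_k\,w_k$ with each $w_j\in\Sigma_\noti^*$, and the run has the shape $q_0\reads{w_0}p_0\reads{a_1}p_0'\reads{w_1}p_1\reads{a_2}\cdots\reads{a_k}p_{k-1}'\reads{w_k}p_k$ with $p_k\in F^\aut$, where each $w_j$-segment uses only $\Sigma_\noti$-transitions. The key invariant, proved by induction on $j$, is $p_j\in S_j$: for $j=0$, $q_0\reads{w_0}p_0$ with $w_0\in\Sigma_\noti^*$ gives $p_0\in\delta_\noti(q_0)=S_0$; for the step, $p_{j-1}\in S_{j-1}$ gives $p_{j-1}'\in\delta(S_{j-1},a_j)$ and hence $p_j\in\delta_\noti(p_{j-1}')\subseteq\delta_\noti(\delta(S_{j-1},a_j))=S_j$. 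The same computation shows that every intermediate state of the $w_0$-segment lies in $\delta_\noti(q_0)$ and every intermediate state of the $w_j$-segment for $j\ge1$ lies in $\delta_\noti(\delta(S_{j-1},a_j))$.

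From the invariant I would read off that every transition of the run belongs to $\area{\mu_i}$: a transition inside the $w_0$-segment is a $\Sigma_\noti$-transition between states of $\delta_\noti(q_0)$, so it lies in $\Delta_\noti(q_0)\subseteq\init$; the transition $p_{j-1}\reads{a_j}p_{j-1}'$ has $p_{j-1}\in S_{j-1}$, so it lies in $\Delta(S_{j-1},a_j)\subseteq\ar{t_j}$; and a transition inside the $w_j$-segment ($j\ge1$) is a $\Sigma_\noti$-transition between states of $\delta_\noti(\delta(S_{j-1},a_j))$, so it lies in $\Delta_\noti(\delta(S_{j-1},a_j))\subseteq\ar{t_j}$. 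Hence the run is a path of $\area{\mu_i}$ starting at its initial state $q_0$ and ending at $p_k$; since $p_k$ occurs on this path (or equals $q_0$ in the degenerate case $w=\varepsilon$) it is a state of $\area{\mu_i}$, and as $p_k\in F^\aut$ it is a final state of $\area{\mu_i}$. Therefore $w\in\lang(\area{\mu_i})$, which is the claimed inclusion.

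The argument is essentially bookkeeping, so I do not expect a deep obstacle; the step needing the most care is keeping straight the correspondence between the reachability sets ($\delta_\noti(\cdot)$ and $\delta(S,a)$) from \cref{def:decentral-projection} and the transition sets ($\Delta_\noti(\cdot)$ and $\Delta(S,a)$) from \cref{def:ar,def:init}, together with the degenerate cases where $\mu_i=\varepsilon$ or some $w_j$ is empty — in particular $w=\varepsilon$, where one must use that $q_0$ is present as a (hence final) state of $\area{\mu_i}$ by construction. I would also remark that only this one inclusion holds: a word accepted by $\area{\mu_i}$ need not project onto $\mu_i$, since $\area{\mu_i}$ merely collects the areas of $\aut$ swept out by accepting runs on words projecting to $\mu_i$.
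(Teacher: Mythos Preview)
Your proof is correct and follows essentially the same approach as the paper: decompose an accepting run of $\aut$ on $w$ along the $\Sigma_i$-letters $a_1,\dots,a_k$, prove by induction that the cut-point states lie in the subset-states $S_0,\dots,S_k$ of the unique path for $\mu_i$ in $\decentral{\aut,\Sigma_i}$, and conclude that each segment of the run lies in $\init$ or the appropriate $\ar{t_j}$. You are slightly more explicit than the paper about the completeness of $\decentral{\aut,\Sigma_i}$, the degenerate cases, and why $p_k$ is a final state of $\area{\mu_i}$, but the argument is the same.
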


Notice that $\area{\mu}$ may accept more
than the words of $\aut$ that project onto $\mu$.
For example $\area{dddc}$ of \cref{fig:nfa-area-2}
accepts $bdce$.

\begin{proof}
Let $q_0$ the initial state of $\aut$,
and $Q$ its set of states.
Let $\decentral{\aut,\Sigma_i}$ be the DFA
defined in \cref{def:decentral-projection},
and let $q_0^P$ its initial state
and $Q^P$ its set of states.
Let $w\in\lang(\aut)$ such that 
$w_\proj{i} = \mu_i$.
If $w$ is the empty word,
then $q_0$ is final in $\aut$.
Thus $q_0$ is also final in $\init$
and also in $\area{\mu_i}$ by definition;
thus $\varepsilon\in\lang(\area{\mu_i}))$.

Suppose $w$ is not the empty word.
Word $w$ can be decomposed into
$w = w_0 a_1 w_1 a_2 \ldots a_k w_k$
such that $a_j\in\Sigma_i$ and 
$w_j \in \Sigma_\noti^*$ for all $j$,
and $a_1 a_2 \ldots a_k = \mu_i$.
There exists an accepting path in $\aut$ 
$q_0\reads{w_0} q_1\reads{a_1} p_1\reads{w_1} q_2\reads{a_2} p_2\ldots q_k\reads{a_k} p_k\reads{w_k} q_{k+1}$
We will show that this accepting path 
is in $\area{\mu_i}$ too.
Let $S_0:=q_0^P \reads{a_1} S_1 \reads{a_2} S_2 \ldots \reads{a_k} S_k$
be the unique path for $\mu_i$ 
in $\decentral{\aut,\Sigma_i}$. 

By \cref{def:init}, 
$q_0\reads{w_0} q_1$ is in $\init$,
and thus also in $\area{\mu_i}$.
Let us show by induction that 
for all $j\in [1,k]$,
$q_j \in S_{j-1}$ and 
$q_j\reads{a_j} p_j\reads{w_j} q_{j+1}$ is in 
$\ar{S_{j-1}\reads{a_j} S_j}$, and thus 
in $\area{\mu_i}$.
For $j=1$, notice that since 
$q_0^P = \delta_\noti(q_0)$,
$q_1$ is in $S_0=q_0^P$.
Transition $q_1\reads{a_1}p_1$
is in $\Delta(S_0,a_1)$
and $p_1\reads{w_1} q_2$ is in 
$\Delta_\noti(\delta(S_0,a_1))$,
and so $q_1\reads{a_1} p_1\reads{w_1} q_2$
is in $\ar{S_0\reads{a_j} S_1}$ 
(\cref{def:ar}).

Suppose that the induction holds 
for some $j-1 \ge 1$.
By induction $q_j \in S_{j-1}$.
As above, transition  $q_j\reads{a_j}p_j$
is in $\Delta(S_{j-1},a_j)$
and $p_j\reads{w_j} q_{j+1}$ is in 
$\Delta_\noti(\delta(S_{j-1},a_j))$.
Thus $q_j\reads{a_j} p_j\reads{w_j} q_{j+1}$ 
is in $\ar{S_{j-1}\reads{a_j} S_j}$
and we are done.
\end{proof}

The first step of the procedure is local:
each local verifier reads $\mu_i$ on $\decentral{\aut,\Sigma_i}$:
if the unique path is not accepting, 
the local verifier sends message \localerror{i}
to the central verifier and the procedure stops.
Otherwise, it proceeds to the second step.

\subsection{Second Step: Central Check}
\label{sec:central-check}

In this section, we show how the 
$\area{\mu_i}$
can be used
to determine if there exists an interleaving 
of the $\mu_i$ in $\lang(\aut)$.
We define the intersection automaton $\interaut$
of the $\area{\mu_i}$.
We will show that if a candidate interleaving exists, 
it can be read on $\interaut$.

\begin{definition}
\label{def:inter}
For each $i\in [1,n]$,
let $\mu_i$ a word of $\Sigma_i^*$, and
let $\area{\mu_i} = (Q^i,\Sigma, \delta^i,\set{q_0},F^i)$ 
the NFA of Definition \ref{def:area} for $\aut$ and $\Sigma_i$.
Then $\interaut$ is the NFA defined as the intersection
$\bigcap_{i=1}^n \area{\mu_i}.$
That is, $\interaut$ is the NFA over $\Sigma$
of state set $\cap_i Q^i$, 
transitions $\cap_i \delta^i$,
initial state $q_0$ and final states $\cap_i F^i$.
\end{definition}

\begin{figure}[t]
    \centering
    \begin{tikzpicture}[node distance=2cm, auto, initial text={}, scale=0.7, transform shape]
		\node[state, initial] (q0) {0};
		\node[state, below left of=q0] (q1) {1};
		\node[state, below=2 of q0] (q2) {2};
		\node[state, accepting, right of=q2] (q3) {3};
		\node[state, left of=q2] (q4) {4};

		\path[->]
		(q0) edge[] node {$b$} (q1)
		(q2) edge[] node {$e$} (q3)
		(q4) edge[bend left=20] node {\(b\)} (q1)
		(q1) edge[bend left=20] node {\(d\)} (q4)
		(q4) edge[] node {$c$} (q2)
		;
\end{tikzpicture}
    \caption{NFA $\textsf{Inter}(\aut, (e,dddc,bb))$}
    \label{fig:nfa-inter}
\end{figure}

It follows from construction that $\interaut$ recognizes
exactly the words $w$ such that 
$w\in\lang(\area{\mu_i})$ for all $i$.
Combining \cref{thm:area} and that fact that
the $\area{\mu_i}$ and $\interaut$
are sub-automata of $\aut$ 
we obtain the following result. 
We shorten $\interaut$ to $\inter$ when it is clear from context.

\begin{theorem}
\label{thm:inter}
Let $\aut$ an NFA over 
$\Sigma = \uplus_{i=1}^n \Sigma_i$, 
and let $\mu_i$ a word of $\Sigma_i^*$
for each $i\in [1,n]$.
Then
$\shuffle_{i=1}^n \ \mu_i \cap \lang(\aut) \subseteq \lang(\interaut) \subseteq \lang(\aut).$
\end{theorem}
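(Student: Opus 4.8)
The plan is to prove the two inclusions separately. The second inclusion, $\lang(\interaut) \subseteq \lang(\aut)$, is essentially immediate: by Definition~\ref{def:inter}, $\interaut$ is built from the intersection of the $\area{\mu_i}$, each of which is a sub-automaton of $\aut$ (as noted after Definition~\ref{def:area}, since $\init$ and each $\ar{t}$ are sub-automata of $\aut$). An intersection of sub-automata of $\aut$ is again a sub-automaton of $\aut$ — its transitions form a subset of $\delta^\aut$, its states a subset of $Q^\aut$, its initial state is $q_0$, and its final states are a subset of $F^\aut$ — and any word accepted by a sub-automaton is accepted by $\aut$. So any accepting path in $\interaut$ is an accepting path in $\aut$.

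For the first inclusion, $\shuffle_{i=1}^n \mu_i \cap \lang(\aut) \subseteq \lang(\interaut)$, I would take a word $w$ lying in both $\shuffle_{i=1}^n \mu_i$ and $\lang(\aut)$ and show $w \in \lang(\interaut)$. The key observation is that $w \in \mu_1 \shuffle \cdots \shuffle \mu_n$ means precisely that $w_{\proj{i}} = \mu_i$ for every $i$ (since the alphabets $\Sigma_i$ are disjoint and partition $\Sigma$, an interleaving of the $\mu_i$ is exactly a word whose projection onto each $\Sigma_i$ recovers $\mu_i$). Combined with $w \in \lang(\aut)$, this gives, for each $i$, that $w \in \set{u \in \lang(\aut) \mid u_{\proj{i}} = \mu_i}$, which by Theorem~\ref{thm:area} is contained in $\lang(\area{\mu_i})$. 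Hence $w \in \lang(\area{\mu_i})$ for all $i$ simultaneously. Now I invoke the remark preceding the theorem statement: $\interaut$ recognizes exactly the words accepted by every $\area{\mu_i}$ — i.e. $\lang(\interaut) = \bigcap_{i=1}^n \lang(\area{\mu_i})$ — so $w \in \lang(\interaut)$, as required.

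The one point that deserves a little care — and which I expect to be the main (minor) obstacle — is the claim that $\lang(\interaut) = \bigcap_i \lang(\area{\mu_i})$, i.e. that the product-free intersection construction in Definition~\ref{def:inter} (taking the common states and common transitions, all of which live inside $\aut$) genuinely recognizes the intersection of the languages. This is not the generic product automaton, so one should note that it works here because all the $\area{\mu_i}$ are sub-automata of the \emph{same} NFA $\aut$: a path exists in all of them iff it uses only transitions common to all of them, iff it is a path in $\interaut$; and since all share the initial state $q_0$ and acceptance is by common final states, accepting paths coincide. Once this is granted, the argument is a direct chain: $w \in \shuffle_i \mu_i \cap \lang(\aut)$ $\Rightarrow$ $w_{\proj{i}} = \mu_i$ and $w \in \lang(\aut)$ for all $i$ $\Rightarrow$ $w \in \lang(\area{\mu_i})$ for all $i$ (Theorem~\ref{thm:area}) $\Rightarrow$ $w \in \lang(\interaut)$, and separately $\lang(\interaut) \subseteq \lang(\aut)$ by the sub-automaton property.
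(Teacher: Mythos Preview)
Your proposal is correct and follows essentially the same approach as the paper: invoke Theorem~\ref{thm:area} to place $w$ in every $\lang(\area{\mu_i})$, use $\lang(\interaut)=\bigcap_i\lang(\area{\mu_i})$ for the left inclusion, and the sub-automaton property for the right inclusion. If anything, you are more careful than the paper in isolating and justifying the equality $\lang(\interaut)=\bigcap_i\lang(\area{\mu_i})$, which the paper simply asserts ``follows from the definition''.
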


\begin{proof}
Let $w$ in $\shuffle_{i=1}^n \ \mu_i \cap \lang(\aut)$.
We have $w$ in $\set{w \in \lang(\aut) \ | \ w_\proj{i} = \mu_i}$ for all $i$.
By \cref{thm:area}, $w$ is in 
$\lang(\area{\mu_i})$ for all $i$.
It follows from the definition that 
$\lang(\inter) = \cap_{i=1}^n \lang(\area{\mu_i})$,
and thus we have the left inclusion.
The right inclusion follows from the fact
that the $\area{\mu_i}$
are sub-automata of $\aut$.
\end{proof}

The second step of the semi-centralized procedure is to apply the centralized procedure (Algorithm \ref{alg:central})
on $\interaut$ and $(\mu_i)_i$.
Intuitively, either a local incompatibility with $\aut$
was detected in the first step, 
or the incompatibility lies in
the interleaving of the $\mu_i$.
The second step will detect this using the information
generated in the first step: transitions of $\aut$
that are not in an area visited by a $\mu_i$
are discarded, and the centralized procedure
is applied to this smaller automaton.

\begin{example}
Figure \ref{fig:nfa-inter} shows  $\textsf{Inter}(\aut, (e,dddc,bb))$ for $\aut$ of \cref{ex:proj}. 
It is $\aut$ from which the transitions we know have not been visited by the multitrace are removed.
Multitrace $(e,dddc,bb)$ is locally correct, 
each local trace at $i$ is accepted by $\decentral{\aut,\Sigma_1}$.
However it is not accepted by $\aut$,
and applying Algorithm \ref{alg:central} on $\textsf{Inter}(\aut, (e,dddc,bb))$ will return \fail.
\end{example}

\begin{corollary}
\label{coro:inter}
Let $\aut$ an NFA over 
$\Sigma = \uplus_{i=1}^n \Sigma_i$, 
and let $\mu_i$ a word of $\Sigma_i^*$
for each $i\in\set{1,\ldots,n}$.
Then $\central{\interaut,(\mu_i)_i}$ recognizes
the language $\shuffle_{i=1}^n \ \mu_i \cap \lang(\aut)$.
\end{corollary}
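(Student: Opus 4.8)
The plan is to combine Theorem~\ref{thm:central} (applied to $\inter$ in place of $\aut$) with the sandwich inclusions of Theorem~\ref{thm:inter}. By Theorem~\ref{thm:central}, $\central{\inter,(\mu_i)_i}$ recognizes $\shuffle_{i=1}^n \mu_i \cap \lang(\inter)$. So it suffices to show that intersecting with $\lang(\inter)$ instead of $\lang(\aut)$ makes no difference, i.e. that $\shuffle_{i=1}^n \mu_i \cap \lang(\inter) = \shuffle_{i=1}^n \mu_i \cap \lang(\aut)$.

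For that equality, the direction $\subseteq$ is immediate from the right inclusion $\lang(\inter)\subseteq\lang(\aut)$ of Theorem~\ref{thm:inter}. For the direction $\supseteq$, take any $w\in \shuffle_{i=1}^n \mu_i \cap \lang(\aut)$; the left inclusion $\shuffle_{i=1}^n \mu_i \cap \lang(\aut)\subseteq\lang(\inter)$ of Theorem~\ref{thm:inter} gives $w\in\lang(\inter)$, and $w\in\shuffle_{i=1}^n \mu_i$ by assumption, so $w\in \shuffle_{i=1}^n \mu_i \cap \lang(\inter)$. Hence the two intersections coincide, and $\central{\inter,(\mu_i)_i}$ recognizes $\shuffle_{i=1}^n \mu_i \cap \lang(\aut)$ as claimed.

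There is one small technical point to check so that Theorem~\ref{thm:central} legitimately applies to $\inter$: that theorem is stated for an NFA with a single initial state, which holds for $\aut$ because it comes from an interaction, and $\inter$ inherits this since by Definition~\ref{def:inter} its unique initial state is $q_0$ (the common initial state of all the $\area{\mu_i}$, which is itself $q_0$ of $\aut$). One should also note that the alphabet of $\inter$ is still $\Sigma = \uplus_{i=1}^n \Sigma_i$, so the decomposition of the $\mu_i$ over the $\Sigma_i$ used in $\central{\cdot,(\mu_i)_i}$ is unchanged. I do not anticipate a real obstacle here: the corollary is essentially a one-line consequence of the two preceding theorems, and the only thing to be careful about is citing the single-initial-state hypothesis for $\inter$.
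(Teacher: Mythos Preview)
Your proof is correct and follows essentially the same approach as the paper: apply Theorem~\ref{thm:central} to $\inter$, then use both inclusions of Theorem~\ref{thm:inter} to show $\shuffle_{i=1}^n \mu_i \cap \lang(\inter) = \shuffle_{i=1}^n \mu_i \cap \lang(\aut)$. Your additional remark that $\inter$ has a single initial state (needed to invoke Theorem~\ref{thm:central}) is a welcome bit of care that the paper's own proof leaves implicit.
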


\begin{proof}
By \cref{thm:central}, $\central{\inter,(\mu_i)_i}$ 
recognizes the language 
$\allowbreak \lang^\textsf{Central} \allowbreak := \shuffle_{i=1}^n \ \mu_i \cap \lang(\inter)$.
By \cref{thm:inter}, 
$\lang(\inter) \subseteq \lang(\aut)$ and thus 
$\lang^\textsf{Central} \subseteq \shuffle_{i=1}^n \ \mu_i \cap \lang(\aut).$
We get the other direction using the fact that 
$\shuffle_{i=1}^n \ \mu_i \cap \lang(\aut)
\subseteq \shuffle_{i=1}^n \ \mu_i$
and that $\shuffle_{i=1}^n \ \mu_i \cap \lang(\aut)
\subseteq \lang(\inter)$ (\cref{thm:inter}).
\end{proof}

If no candidate interleaving exists, 
there are many cases in which this will 
be apparent on the intersection automaton,
which will have no accepting path.
To detect this before applying Algorithm \ref{alg:central}, we introduce a preprocessing step
in which we check that $\interaut$ has at least one
accepting path.

\begin{example}
Consider multitrace $(e,dddc,\varepsilon)$.
$\area{\varepsilon}$ for $\aut$ and $\Sigma_3=\set{b}$ is equal to the NFA of three transitions $0 \reads{c} 2, 0 \reads{a} 3$ and $2 \reads{e} 3$. 
$\textsf{Inter}(\aut, \allowbreak(e,dddc,\varepsilon))$ is equal to  state $q_0$ with no transitions, and no accepting paths.
\end{example}

\subsection{Procedure}
\label{sec:decentral-correct}

The local verifier set-up consists in constructing,
for each location $i$,
DFA $\decentral{\aut, \allowbreak \Sigma_i}$, 
its transitions labels $\area{t}$ for each transition $t$,
as well as $\init$.
Given an NFA $\aut$ over 
$\Sigma = \uplus_{i=1}^n \Sigma_i$
and a family of local traces $(\mu_i)_{i=1}^n$ 
such that $\mu_i \in \Sigma_i^*$,
the semi-centralized procedure works as follows.
\begin{enumerate}
    \item \emph{Local checks:} for each location $i$,    
the local verifier for $i$ reads $\mu_i$
on $\allowbreak \decentral{\aut,\Sigma_i}$.
The local verifier sends \localerror{i}
to the central verifier if $\mu_i$ is not accepted
and the procedure stops.
Otherwise, it sends $\area{\mu_i}$
to the central verifier.
This is illustrated in Algorithm \ref{alg:local-check}.
    \item \emph{Central check:} the central verifier constructs $\interaut$.
If $\inter$ has no accepting path,
the central verifier sends \intererror{} 
and the procedure stops.
Otherwise, the central verifier applies the
centralized procedure of Algorithm \ref{alg:central}
on the $\mu_i$ and $\inter$: 
it constructs $\central{\inter,(\mu_i)_i}$
and answers \pass{} if there is a final state;
otherwise it answers \centralerror{}.
This is illustrated in Algorithm \ref{alg:central-check}.
\end{enumerate}
\localerror{i} gives the information that 
the multitrace is not locally correct because of location $i$.
\intererror{} gives the information that the multitrace is locally correct, but that there is no common path for accepting words projecting onto the local traces.
Finally, \centralerror{} gives the information that the multitrace is locally correct, but no interleaving of the local traces is accepted.

\begin{algorithm}[t]
\caption{Local check for location $i$}
\label{alg:local-check}
\Input{DFA $\decentral{\aut,\Sigma_i} = (Q,\Sigma_i,\delta,\set{q_0},F)$, $\ar{t}$ for all $t \in \delta$, $\init$ and trace $\mu_i \in \Sigma_i^*$}
\Output{\localerror{i} xor NFA $\area{\mu_i}$}
$\area{\mu_i} \gets \init$\;
$S \gets q_0$\; 
\textbf{if} $q_0 \in F$ \textbf{then} \KwRet{} \pass{}\;
\For{$a$ \textbf{in} $\mu_i$}{
    $S \gets \delta(S)$\;
    \textbf{if} $S=\emptyset$ \textbf{then} \KwRet{} \localerror{i}\;
    \textbf{if} $S \in F$ \textbf{then} \KwRet{} \pass{}\;
}
\KwRet{} \localerror{i}
\end{algorithm}

\begin{algorithm}[t]
\caption{Central check}
\label{alg:central-check}
\Input{pairs $(\mu_i, \area{\mu_i})$ of trace and NFA for each location $i$}
\Output{\pass{} or \intererror{} or \centralerror{}}
$\interaut \gets \cap_{i=1}^n \area{\mu_i}$ \;
\uIf{$\interaut$ has no accepting path}{\KwRet{} \intererror{}}
\Else{apply \textsf{Algorithm1}$(\inter,(\mu_i)_i)$ 

}
\end{algorithm}

\paragraph{Soundness and Completeness}
The semi-centralized procedure is \emph{sound}: 
if an \fail{} message is sent,
then there exists no interleaving of the $\mu_i$
that is recognized by $\aut$.
There are three cases in which a \fail{} message is sent.
\begin{itemize}
    \item 
Case 1: \localerror{i} is sent by a local verifier
at location $i$ in the first step.
This implies that $\mu_i$ is not
in $\lang(\aut)_\proj{i}$ (\cref{thm:decentral}).
Suppose there exists $w$, an interleaving of the $\mu_i$
that is recognized by $\aut$, i.e.
$w\in \shuffle_{i=1}^n \ \mu_i \cap \lang(\aut)$.
Then $w_\proj{i}=\mu_i$ and 
$w_\proj{i}\in \lang(\aut)_\proj{i}$, a contradiction.
    \item
Case 2: \intererror{} is sent by the central verifier
in the second step.
This implies $\interaut$ recognizes the empty language,
and thus
$\shuffle_{i=1}^n \ \mu_i \cap \lang(\aut)$
is empty, by \cref{thm:area}.
    \item
Case 3: \centralerror{} is sent by the central verifier.
Thus $\interaut$ recognizes the empty language, 
and this implies that 
$\shuffle_{i=1}^n \ \mu_i \cap \lang(\aut)$
is empty, by \cref{coro:inter}.
\end{itemize}
The semi-centralized procedure is \emph{complete}: 
if there exists no interleaving of the $\mu_i$
that is recognized by $\aut$,
then a \fail{} message is sent.
We reason by contraposition:
suppose the central verifier sends \pass.
This means the construction of 
$\allowbreak \central{\inter,(\mu_i)_i}$
marks a state final.
Since only states reachable 
from the initial state are built, 
this implies the existence of an accepting path.
 Thus there exists
$w\in\shuffle_{i=1}^n \ \mu_i \cap \lang(\aut)$
(\cref{coro:inter}).

\paragraph{Complexity}
We consider two complexities:
the complexity of setting up the procedure 
given $\aut$, 
and the complexity of the two-step procedure
which checks a given family of local traces
$(\mu_i)_{i=1}^n$ against $\aut$.

 \textit{Set-up:} \
Given $\aut$ and $\Sigma_i$, 
the set-up consists in building 
DFA $\decentral{\aut,\Sigma_i}$, 
function $\textsf{area}$ and $\init$.
$\decentral{\aut,\Sigma_i}$
has at most $2^{\size{Q^\aut}}$ states.
Since it is a DFA, it has at most
$2^{\size{Q^\aut}} \size{\Sigma_i}$ transitions.
Each $\ar{t}$ as well as $\init$ 
is a subautomaton of $\aut$,
i.e. an NFA with at most $\size{Q^\aut}$ states.
Therefore, the time complexity 
of the procedure set-up 
is exponential in $\aut,\Sigma_i$.
However, this is a worst-case
complexity: in practice we only build 
the subset-states of $\decentral{\aut,\Sigma_i}$
reachable from its initial state, 
and usually far from all elements of $2^{Q^\aut}$ 
are states of $\decentral{\aut,\Sigma_i}$.
This can be seen on the experiments in Section \ref{sec:implementation}.

 \textit{Procedure:} \
Given $(\mu_i)_i$,
the first step consists in each local verifier
reading $\mu_i$ on $\decentral{\aut,\Sigma_i}$
and outputting $\area{\mu_i}$:
this takes linear complexity in $\mu_i$.
DFAs have unique paths on words and $\area{\mu_i}$
is simply the union of the labels of the path.
If $\mu_i$ is not accepted (\localerror{i}), the procedure stops here.
The second step builds $\interaut$, 
sub-automaton of $\aut$,
to check for an accepting path: 
this takes complexity $n \size{\aut}$,
for the $n$ NFAs $\area{\mu_i}$.
If \intererror, the procedure stops here.
Otherwise the procedure applies the centralized
procedure to $(\mu_i)_i$ and $\interaut$,
which takes quadratic complexity 
as seen in \cref{sec:proc-central}.

\paragraph{Evaluation}
This approach is reusable:
several traces can be checked against one interaction model
without having to rebuild the local verifiers.
The initial cost of computing the $\decentral{\aut,\Sigma_i}$
is mitigated by the fact that they are reused 
for many local traces.
The two-step procedure is then essentially a 
pre-processing which finds local errors fast if they exist,
and which otherwise applies the centralized procedure on a smaller NFA.
The procedure is especially efficient in the case of local errors.
The idea is that the local verifiers can do their work in parallel, sending the answers to the central verifier.
Local errors are detected at very little complexity-cost,
and do not require communication between local verifier and central verifier (which may be costly).

\section{Implementation}
\label{sec:implementation}

We implemented our two procedures in a prototype tool \cite{prototype-rvia-anonymous}.
Our tool takes an NFA and a set of multitraces as input.
The NFA must have $n$ subsystems defined as disjoint sub-alphabets of the NFA's alphabet (these are added by the user or randomly generated),
and the multitraces should contain $n$ local traces that are words of the corresponding sub-alphabet.
The tool has two commands:
\textbf{AC} applies the centralized procedure and returns \pass{} or \fail;
    \textbf{AS} applies the semi-centralized procedure and returns \pass{}, \intererror, \centralerror{} or  \localerror{i} for $i$ a subsystem name.
We coded in the Rust language and straightforwardly followed the pseudo-code algorithms of this paper.
In our experiments we make the simplifying assumption that each location has exactly one subsystem.
The experiments were run on a computer with an
Intel(R) Core(TM) i7-13800H (2.50 GHz).
The results are summarized in Table \ref{tab:grouped_experiments}.

\sisetup{
  round-mode = places,
  round-precision = 3,
  scientific-notation = true,
  output-exponent-marker = \text{e}  
}
\begin{table}[t]
\centering
\caption{Experimental results grouped by NFA, with different error scenarios}
\label{tab:grouped_experiments}
\renewcommand{\arraystretch}{1.2}
\setlength{\tabcolsep}{5pt}

\begin{tabular}{l l l c c c}
\toprule
\textbf{NFA} & \textbf{Verdict} & {\textbf{Length / \#MT}} &
{\textbf{AC(s)}} & {\textbf{AS(s)}} & {\textbf{Setup(s)}} \\
\midrule

\multirow{2}{*}{\textbf{NFA1}} 
 & Pass         & [200 -250]/1000  & \num{0.1255524} & \num{9.323158}  & \num{0.001763}  \\
 & LocalError     & [200-250]/196  & \num{ 0.00071156} & \num{ 0.00140382} &\num[scientific-notation = true, round-precision=3]{0.00133524}  \\
 states: 94 
 & InterError    &[200-250]/2004  &  \num{0.00852422}  &  \num{0.13879746}  &  \num[scientific-notation = true, round-precision=3]{0.00167436}  \\
 trans.: 320
 & CentralError   & [200-250]/910 &   \num{1.33229162}  &  \num{50.15003662}  &  \num[scientific-notation = true, round-precision=3]{0.0013959}  \\
 & AllTypes     & [200-250]/1000 & \num{0.1127796} & \num{7.99077106}  &\num[scientific-notation = true, round-precision=3]{0.00163412}  \\
\midrule

\multirow{2}{*}{\textbf{NFA2}} 
 & Pass         & [200-250]/1000 & \num{0.0991647} & \num{9.3375662} & \num[scientific-notation = true, round-precision=3]{0.0002053} \\
 & LocalError     & [200-250]/1339  &\num{4.06846654} & \num{0.16494372} & \num[scientific-notation = true, round-precision=3]{0.00019096}\\
 states: 35
 & InterError      & [200-250]/1569 & \num{0.00463964}&\num{0.03992106}  & \num[scientific-notation = true, round-precision=3]{0.00018442}\\
 trans.: 23
 & CentralError   & [200-250]/358 &\num{0.22848372} & \num{9.56081716} &\num[scientific-notation = true, round-precision=3]{0.00019376} \\
 & AllTypes      & [200-250]/1000  &\num{1.40773182} & \num{4.801491} & \num[scientific-notation = true, round-precision=3]{0.00018528}\\

\midrule

\multirow{2}{*}{\textbf{NFA3}} 
 & Pass         & [200-250]/1000 & \num{0.0991545} & \num{9.1013433} & \num{9.18e-5} \\
 & LocalError    & [200-250]/2017  &\num{0.16728364} & \num{0.00017134} &\num{5.392e-5}\\
 states: 15
 & InterError  & [200-250]/1569 &\num{0.0117915} & \num{0.01007384} &\num{6.354e-5}\\
 trans.: 75
 & CentralError  & [200-250]/1015 &\num{0.00620678} & \num{25.0345444} &\num{5.658e-5}\\
 & AllTypes    & [200-250]/1000 &\num{0.08015528} & \num{2.69461402 } &\num{5.97e-5}\\

\midrule

\multirow{2}{*}{\textbf{NFA4}} 
 & Pass          & [200-250]/1000 & \num{0.5123971} & \num{9.9814967} & \num{0.0014573} \\
 & LocalError    & [200-250]/1971  & \num{0.6600194} & \num{0.31546424} & \num[scientific-notation = true, round-precision=3]{0.00146932} \\
 states: 124
 & InterError      & [200-250]/949 & \num{0.14561072} & \num{0.0489298} & \num[scientific-notation = true, round-precision=3]{0.00124302} \\
 trans.: 349
 & CentralError    & [200-250]/1114 & \num{0.2744528} & \num{29.05818554} & \num[scientific-notation = true, round-precision=3]{0.00160478} \\
 & AllTypes       & [200-250]/1000  & \num{0.39412858} & \num{3.02697582} & \num[scientific-notation = true, round-precision=3]{0.00143856} \\

\bottomrule
\end{tabular}
\end{table}

The NFAs 
were selected from the AutomatArk Benchmark \cite{automatark},
which itself collects automata from different benchmark sources.
We selected NFAs that are varied in source, size and in shape; 
we restrict ourselves to four in this paper for length considerations.
For each NFA, we call \textbf{AC} and \textbf{AS} on 
sets of multitraces that all return \textsf{verdict}, for 
\textsf{verdict} equal to \pass{}, \intererror, \centralerror{} or  \localerror{}.
We additionally run the commands on a set of multitraces with mixed return verdicts.
The sets of multitraces are generated using randomness:
for \pass{} we traverse the NFA to generate correct traces which we then project into a multitrace;
for \localerror{}, we take a correct multitrace and introduce errors;
for \intererror{} we mix local traces from correct multitraces;
finally, for \centralerror{} we take a correct trace with cycles and iterate the cycle a different number of times on each subsystem.

Given a set of multitraces for a given verdict,
the number of multitraces is \textbf{\#MT}, 
the length of each multitrace is in the interval \textbf{Length}, and
the average time for running a command on the set is given in seconds. 
When the command is \textbf{AS}, the average set-up time (constructing the $\decentral{\aut,\Sigma_i}$) is also given to show that in practice in takes very little.
Both procedures are very efficient, and \textbf{AS} is more efficient on verdict \localerror{} except in one case where both times are extremely small.

\section{Discussion}
\label{sec:related}
\paragraph{Related Work}
This paper studies RV for interactions and multitraces,
on systems which are distributed and with no global clock.
Interactions were introduced in \cite{MaheGG20} 
and have been studied as specifications for RV approaches
that consider traces, multitraces or incomplete traces (see Mahe's doctoral thesis \cite{PhdMahe21}).
For a full positioning of interactions with regards to other specification languages, we send the reader to Section 8 of \cite{MaheBGG25}.
This paper has similarities with work on \textit{multiparty session types} (MST).
In MST literature, the goal is to synthesize local specifications  from a  global specification such that if subsystems satisfy the local specifications 
then the whole system satisfies the global specification and is deadlock free. 
Global specifications for which this is possible are called \textit{implementable}.
In \cite{LiSWZ23}, the authors model the global specification as an NFA, and synthesize local specifications by projecting the NFA onto the subsystems (like our  \textsf{Proj}) when the global is implementable.
We study global specifications even when they are not implementable in the MST sense, by allowing subsystems to communicate.
Some papers use NFA as (global) specifications for \textit{decentralized} RV \cite{ElHokayemF20,FalconeCF14}.
They are similar to our approach in that local verifiers read local traces and communicate to build a global verdict.
However, they assume a global clock, and all subsystem emit actions synchronously.
A paper close to ours is \cite{MostafaB15}: it considers a distributed system with local verifiers and no global clock. The specifications are in LTL$_3$ \cite{BauerLS11}, an adaptation of LTL which outputs (3) verdicts on finite traces. Each subsystem has a local verifier which has a copy of the (deterministic) specification automaton, and the local verifiers communicate regularly using vector clocks.
The verifiers keep a set of possible verdicts, 
and the algorithm guarantees that if a correct interleaving exists, one of the verifiers has the correct verdict (similar to our condition of one correct interleaving being enough).
This paper uses a different technique from ours, relying on more communication between verifiers. Their problem is almost the same as ours, except that verdicts (instead of \pass/\fail) are output, and multitraces are seen as finite prefixes of infinite multitraces.

\paragraph{Future Work}
Our work is cast in the light of interactions, 
but can be applied as soon as the specification
is given as an NFA and if the subsystem alphabets are disjoint.
In future, we want to exploit the fact that actions are either sends or receives to deliver a finer analysis.
Another avenue for improvement is to consider finite prefixes of multitraces, inspired by the verdicts of \cite{MostafaB15},
as well as multitraces in which message loss may occur.
Communication protocols are often designed in a parameterized manner, e.g. one server and $n$ clients.
Another research avenue is to adapt interactions to express this parameterization.

\subsubsection{Acknowledgements}
This work was funded by the French government under the France 2030 ANR program “PEPR Networks of the Future” (ref. ANR-22-PEFT-0009).

\bibliographystyle{plain}
\bibliography{ref}

\end{document}